\newtheorem*{theorem}{Theorem}
\begin{document}


\title{\bf Asymptotic single-particle states and exact Lorentz-violating all-loop quantum corrections for scalar field theory}



\author{P. R. S. Carvalho}
\email{prscarvalho@ufpi.edu.br}
\affiliation{\it Departamento de F\'\i sica, Universidade Federal do Piau\'\i, 64049-550, Teresina, PI, Brazil}

\author{M. I. Sena-Junior}
\email{marconesena@poli.br}
\affiliation{\it Escola Polit\'{e}cnica de Pernambuco, Universidade de Pernambuco, 50720-001, Recife, PE, Brazil}
\affiliation{\it Instituto de F\'{i}sica, Universidade Federal de Alagoas, 57072-900, Macei\'{o}, AL, Brazil}




\begin{abstract}
We perform the all-loop renormalization of the O($N$) $\lambda\phi^{4}$ scalar field theory with Lorentz violation which is exact in the Lorentz-violating $K_{\mu\nu}$ coefficients. This task is fulfilled analytically firstly explicitly at next-to-leading order and later at all loop levels through an induction process based on a general theorem emerging from the exact approach. We show that the exact results reduce to the approximated ones, previously obtained, in the appropriated limit and comment on their implications. The current exact calculation involving such a symmetry breaking mechanism in the referred theory is the first one in literature for our knowledge. At the end, we analyze the effect of Lorentz violation on the asymptotic single-particle states.
\end{abstract}


\maketitle


\section{Introduction}\label{Introduction} 

\par The possibility of new physical effects due to Lorentz symmetry breaking mechanisms has recently attracted the attention of both high energy physics \cite{PhysRevD.92.045016,0004-637X-806-2-269,doi:10.1142/S0217751X15500724,vanTilburg2015236,PhysRevD.86.125015,PhysRevD.84.065030,Carvalho2013850,Carvalho2014320} and condensed matter theory communities \cite{EurophysLett10821001,doi:10.1142/S0217979215502598,doi:10.1142/S0219887816500493}. The most of these studies are, from a practical and technical viewpoint, approached through finite perturbative expansions in some characteristic Lorentz violating (LV) parameter, usually at its first order \cite{PhysRevD.92.045016,0004-637X-806-2-269,doi:10.1142/S0217751X15500724,vanTilburg2015236,PhysRevD.86.125015,Carvalho2014320,PhysRevD.58.116002,PhysRevD.65.056006,PhysRevD.79.125019,PhysRevD.77.085006}. Just a few of them displayed results, tediously computed, at its second order \cite{PhysRevD.84.065030,Carvalho2013850,EurophysLett10821001,doi:10.1142/S0217979215502598,doi:10.1142/S0219887816500493} and only one, up to the moment for our knowledge, exactly \cite{PhysRevLett.83.2518}. 

\par The theory studied in this Letter, a LV one characterized by the LV $K_{\mu\nu}$ coefficients, is specified by three parameters: field, mass and coupling constant. This theory is represented by the CPT-even scalar sector of the so called standard model extension \cite{PhysRevD.58.116002}, the LV version of the conventional standard model of elementary particles and fields \cite{Itzykson,Peskin}. The referred symmetry breaking mechanism is due to a LV kinetic term of the form $K_{\mu\nu}\,\partial^{\mu}\phi\,\partial^{\nu}\phi$. This is not the only LV kinetic term preserving the O($N$) symmetry. In fact, higher-derivative ones of the form $\frac{1}{2\Lambda_{L}^{2n-2}}(\partial^{n}\phi)^{2}$, where $\Lambda_{L}$ is an energy scale and $n > 1$, were also proposed in literature and will not be discussed here \cite{PhysRevD.76.125011}. Unfortunately, this theory is plagued by ultraviolet divergences and a theory making physical sense is attained only if its finite version can be found with the renormalization of its parameters or, equivalently, the computation of the $\beta$-function $\beta(g)$ and field $\gamma(g)$ and mass $\gamma_{m}(g)$ anomalous dimensions, where $g$ is the dimensionless renormalized coupling constant. In fact, a finite theory, although in a limited range of small values of the LV coefficients, was found. The $\beta$-function and $\gamma(g)$ \cite{PhysRevD.84.065030} and $\gamma_{m}(g)$ \cite{Carvalho2013850} were renormalized at two-loop level and second order in $K_{\mu\nu}$. Furthermore, the computation of $\gamma(g)$ has been improved up to three-loop level and at first order in the LV coefficients \cite{Carvalho2014320}. 
In the last three works just mentioned, an attempt for obtaining all-loop expressions for $\beta(g)$, $\gamma(g)$ and $\gamma_{m}(g)$ was proposed, although being valid only for small LV coefficients values, as an induction process starting from their expressions for finite second order in $K_{\mu\nu}$ through coordinates redefinition techniques or equivalently the joint application of a redefinition of the field and coupling constant parameters. We show, as a consequence of the exact calculation, that an exact version of such a goal can be rigorously achieved and interpreted in a simpler and clearer way, than in the earlier non-exact one, by regarding just a single concept of the theory: the loop level considered. Although the earlier approximated results, tediously computed as an expansion in $K_{\mu\nu}$, can give a good insight on the effect of taking into account LV loop quantum corrections to the theory, a truly reliable and complete understanding of the final renormalized theory can be achieved only through an exact treatment of its LV radiative quantum corrections valid for the entire finite range of $K_{\mu\nu}$ values and loop levels. This is the main aim of this Letter. Moreover, the exact treatment presented here could inspire further works in considering exact LV effects in systems described by others LV theories.

\par In this Letter we compute analytically the exact LV quantum corrections for O($N$) self-interacting scalar field theories. Firstly, the LV quantum corrections are evaluated explicitly at next-to-leading loop order and later at any loop level. At the conclusions, we comment on the implications of this exact calculation.

\section{Exact Lorentz-violating next-to-leading loop order quantum corrections}

\par The theory to be renormalized is described by the Lagrangian density
\begin{eqnarray}\label{bare Lagrangian density}
\mathcal{L}_{B} = \frac{1}{2}\,\partial_{\mu}\phi_{B}\,\partial^{\mu}\phi_{B} + K_{\mu\nu}\,\partial^{\mu}\phi_{B}\,\partial^{\nu}\phi_{B} + \frac{1}{2}\,m_{B}^{2}\,\phi_{B}^{2} + \frac{\lambda_{B}}{4!}\phi_{B}^{4}, 
\end{eqnarray}
defined in $d$-dimensional Euclidean spacetime (we can also define the referred Lagrangian density in Minkowski spacetime through a Wick rotation) where the bare field $\phi_{B}$, coupling constant $\lambda_{B}$ and mass $m_{B}$ are ultraviolet divergent and the LV coefficients $K_{\mu\nu}$ are dimensionless, symmetric and equal for all $N$ components of the field such that the O($N$) symmetry of the field remains intact. We treat the LV $K_{\mu\nu}$ coefficients exactly, thus they can assume any finite values. We have to cure the ultraviolet divergences of the theory by applying the most general and elegant renormalization method for massive theories, the Bogoliubov-Parasyuk-Hepp-Zimmermann (BPHZ) one \cite{BogoliubovParasyuk,Hepp,Zimmermann}. We follow the definitions and notation of Ref. \cite{Kleinert}. From the initially bare Lagrangian density we can obtain the divergent primitively one-particle-irreducible ($1$PI) vertex parts, the two- and four-point functions, namely $\Gamma_{B}^{(2)}$ and $\Gamma_{B}^{(4)}$, respectively. The ultraviolet divergences of the theory are contained on these correlation functions around $d = 4$ and if we dimensionally regularize them in $d = 4 - \epsilon$, we can attain the finite final theory, thus obtaining the corresponding renormalized parameters $\phi$, $\lambda$ and $m$. Any higher vertex parts can be written in terms of the primitively ones as an skeleton expansion \cite{ZinnJustin} and the renormalization of the latter ensures the renormalization of the former. We can eliminate the divergences of the bare theory by adding terms to the bare $1$PI vertex parts at a given loop order. These terms are called counterterms. The countertems can be viewed as being originated by new terms to be included in the initially bare Lagrangian density at the corresponding order. These new terms have the same functional form of the ones in the initial Lagrangian density. The new and the initial terms can be combined such that the resulting terms, which are now finite, are proportional to the initial ones but now scaled by constants. These constants are called renormalization constants and absorb the ultraviolet divergences of the theory. They are defined by $\phi = Z_{\phi}^{-1/2}\phi_{B}$,$g = \mu^{-\epsilon}Z_{\phi}^{2}\lambda_{B}/Z_{g}$ and $m^{2} = Z_{\phi}m_{B}^{2}/Z_{m^{2}}$, where $g=\mu^{-\epsilon}\lambda$ is the renormalized dimensionless coupling constant and $\mu$ is an arbitrary mass parameter. Thus the $1$PI vertex parts at that order are finite. We can proceed at the next order and so on order by order in perturbation theory for attaining the final renormalized vertex parts and thus the renormalized Lagrangian density now depending on the renormalized parameters. The renormalized vertex parts satisfy to the Callan-Symanzik equation
\begin{eqnarray}
\left[\mu\frac{\partial}{\partial\mu} + \beta(g)\,\frac{\partial}{\partial g} - n\,\gamma(g) + \gamma_{m}(g)\,m\frac{\partial}{\partial m}\right] \Gamma^{(n)}(P_{1},...,P_{n};m,g,\mu) = 0
\end{eqnarray}
where the $\beta$-function and field $\gamma$ and mass $\gamma_{m}$ anomalous dimensions are given by
\begin{eqnarray}
\beta(g) = \mu\frac{\partial g}{\partial\mu}\Bigg\vert_{B}, \hspace{1cm}\gamma(g) = \frac{1}{2}\mu\frac{\partial}{\partial\mu}Z_{\phi}\bigg\vert_{B}, \hspace{1cm}\gamma_{m}(g) = \frac{\mu}{m}\frac{\partial m}{\partial\mu}\Bigg\vert_{B}
\end{eqnarray}
and $|_{B}$ means that the respective parameters must be that of bare theory. At next-to-leading loop level, the renormalization constants are written as a Laurent expansion in $\epsilon$ by
\begin{eqnarray}
&& Z_{\phi}(g,\epsilon^{-1}) = 1 +   \frac{1}{P^2 + K_{\mu\nu}P^{\mu}P^{\nu}} \Biggl[ \frac{1}{6} \mathcal{K} 
\left(\parbox{10mm}{\includegraphics[scale=1.0]{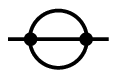}}
\right) \Biggr|_{m^{2}=0} S_{\parbox{10mm}{\includegraphics[scale=0.5]{fig6.eps}}} +   \frac{1}{4} \mathcal{K} 
\left(\parbox{10mm}{\includegraphics[scale=1.0]{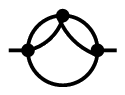}} \right) \Biggr|_{m^{2}=0}S_{\parbox{8mm}{\includegraphics[scale=0.5]{fig7.eps}}} + \nonumber \\&&  \frac{1}{3} \mathcal{K}
  \left(\parbox{10mm}{\includegraphics[scale=1.0]{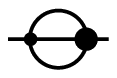}} \right) S_{\parbox{6mm}{\includegraphics[scale=0.5]{fig26.eps}}} \Biggr],
\end{eqnarray}
\begin{eqnarray}
&& Z_{g}(g,\epsilon^{-1}) = 1 +  \frac{1}{\mu^{\epsilon}g} \Biggl[ \frac{1}{2} \mathcal{K} 
\left(\parbox{10mm}{\includegraphics[scale=1.0]{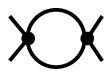}} + 2 \hspace{1mm} perm.
\right) S_{\parbox{10mm}{\includegraphics[scale=0.5]{fig10.eps}}} +  \frac{1}{4} \mathcal{K} 
\left(\parbox{15mm}{\includegraphics[scale=1.0]{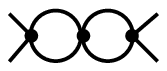}} + 2 \hspace{1mm} perm. \right) S_{\parbox{10mm}{\includegraphics[scale=0.5]{fig11.eps}}}  + \nonumber \\&& \frac{1}{2} \mathcal{K} 
\left(\parbox{12mm}{\includegraphics[scale=.8]{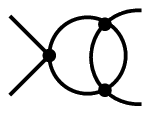}} + 5 \hspace{1mm} perm. \right) S_{\parbox{10mm}{\includegraphics[scale=0.4]{fig21.eps}}} +  \frac{1}{2} \mathcal{K} 
\left(\parbox{10mm}{\includegraphics[scale=1.0]{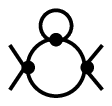}} + 2 \hspace{1mm} perm.
\right) S_{\parbox{10mm}{\includegraphics[scale=0.5]{fig13.eps}}} +  \nonumber \\&& \mathcal{K}
  \left(\parbox{10mm}{\includegraphics[scale=1.0]{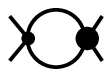}} + 2 \hspace{1mm} perm. \right) S_{\parbox{6mm}{\includegraphics[scale=0.5]{fig25.eps}}} + \mathcal{K}
  \left(\parbox{10mm}{\includegraphics[scale=1.0]{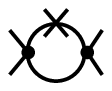}} + 2 \hspace{1mm} perm. \right) S_{\parbox{6mm}{\includegraphics[scale=0.5]{fig24.eps}}}\Biggr],
\end{eqnarray}
\begin{eqnarray}
&& Z_{m^{2}}(g,\epsilon^{-1}) = 1 + \frac{1}{m^{2}} \Biggl[ \frac{1}{2} \mathcal{K}
\left( \parbox{10mm}{\includegraphics[scale=1.0]{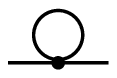}} \right)S_{\scalebox{0.3}{\parbox{10mm}{\includegraphics[scale=1.0]{fig1.eps}}}} +  \frac{1}{4} \mathcal{K}  \left( \parbox{10mm}{\includegraphics[scale=1.0]{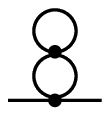}} \right)S_{\scalebox{0.3}{\parbox{10mm}{\includegraphics[scale=1.0]{fig2.eps}}}}
   +  \frac{1}{2} \mathcal{K}
  \left( \parbox{10mm}{\includegraphics[scale=1.0]{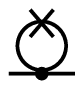}} \right)S_{\scalebox{0.3}{\parbox{10mm}{\includegraphics[scale=1.0]{fig22.eps}}}} 
  + \nonumber \\&&  \frac{1}{2} \mathcal{K} \left( \parbox{10mm}{\includegraphics[scale=1.0]{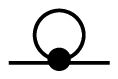}} \right)S_{\scalebox{0.3}{\parbox{10mm}{\includegraphics[scale=1.0]{fig23.eps}}}} +  \frac{1}{6} \mathcal{K}
  \left( \parbox{10mm}{\includegraphics[scale=1.0]{fig6.eps}}\hspace*{0.2cm} \right) \Biggr|_{P^2 + K_{\mu\nu}P^{\mu}P^{\nu} = 0} S_{\scalebox{0.3}{\parbox{10mm}{\includegraphics[scale=1.0]{fig6.eps}}}} \Biggr],
\end{eqnarray}
where the operator $\mathcal{K}$ selects only the divergent terms of the referred diagram. The factor $S_{\parbox{3mm}{\includegraphics[scale=.3]{fig1.eps}}}$ is the symmetry factor for the corresponding diagram in a theory where the field has $N$-components. In the fourteen Feynman diagrams above, the internal lines represents the LV renormalized free propagator given by $q^{2} + K_{\mu\nu}q^{\mu}q^{\nu} + m^{2}$. For evaluating the diagrams exactly in the LV parameters and therefore for any of its finite values, thus avoiding to expand them in the limited range of small powers of  $K_{\mu\nu}$ and to turn out the calculations as tedious ones, we observe that the free propagator has a quadratic and bilinear dependence on its momentum. This permits us to write in the $d$-dimensional space of momentum integrals that $q^{2} + K_{\mu\nu}q^{\mu}q^{\nu} + m^{2} = (\delta_{\mu\nu} + K_{\mu\nu})q^{\mu}q^{\nu} + m^{2} \equiv q^{t}(\mathbb{I}$ + $\mathbb{K})q + m^{2}$, where in last expression $q$ is a $d$-dimensional vector represented by a column matrix and is $q^{t}$ represented by a row matrix, and $\mathbb{I}$ and $\mathbb{K}$ are the matrix representations of the identity and $K_{\mu\nu}$, respectively. Now making the substitution $q^{\prime} = \sqrt{\mathbb{I} + \mathbb{K}}\hspace{1mm}q$, the exact theory acquires LV contributions in two forms. One of them comes from the volume elements of Feynman diagrams such that $d^{d}q^{\prime} = \sqrt{\det(\mathbb{I} + \mathbb{K})}\,d^{d}q$, thus $d^{d}q = d^{d}q^{\prime}/\sqrt{\det(\mathbb{I} + \mathbb{K})}$.  Defining this full or exact LV contribution as $\mathbf{\Pi} \equiv 1/\sqrt{\det(\mathbb{I} + \mathbb{K})}$, we see that for slight violations of Lorentz symmetry, we have that the full LV $\mathbf{\Pi}$ factor reduces to its perturbative second order counterpart $\Pi \simeq \Pi^{(0)} + \Pi^{(1)} + \Pi^{(2)}$, where $\Pi^{(i)}$ is the LV contribution of order $i$ in $K_{\mu\nu}$ \cite{PhysRevD.84.065030,Carvalho2013850,Carvalho2014320}. The theory is also mo\-di\-fied by another type of LV contribution, that involving external momenta. It is shown in the expressions for the  momentum-dependent Feynman diagrams evaluated in dimensional regularization 
\begin{eqnarray}
&& \int \frac{d^{d}q}{(2\pi)^{d}} \frac{1}{(q^{2} + 2Pq + M^{2})^{\alpha}} =  \hat{S}_{d}\frac{1}{2}\frac{\Gamma(d/2)}{\Gamma(\alpha)}\frac{\Gamma(\alpha - d/2)}{(M^{2} - P^{2})^{\alpha - d/2}},   
\end{eqnarray}
$\hat{S}_{d}=S_{d}/(2\pi)^{d}=2/(4\pi)^{d/2}\Gamma{(d/2)}$, and $S_{d}=2\pi^{d/2}/\Gamma(d/2)$ is the surface area of a unit $d$-dimensional sphere. Its finite value in four-dimensional spacetime is  $\hat{S}_{4}=2/(4\pi)^{2}$. This definition adds to each loop integration a factor of $\hat{S}_{4}$ at four-dimensional spacetime and avoids the appearance of Euler-Mascheroni constants in the middle of calculations \cite{Amit}. Now making $q^{\prime} \rightarrow P^{\prime}$ and $q \rightarrow P$, $P^{\prime 2} = P^{2} + K_{\mu\nu}P^{\mu}P^{\nu}$. The computed diagrams are displayed below  
\begin{eqnarray}
\left(\parbox{12mm}{\includegraphics[scale=1.0]{fig6.eps}}\right)\Biggr|_{m^{2}=0} =  -\frac{g^{2}(P^2 + K_{\mu\nu}P^{\mu}P^{\nu})}{2(4\pi)^{4}\epsilon} \left[ 1 + \frac{1}{4}\epsilon -2\epsilon\, J_{3}(P^{2} + K_{\mu\nu}P^{\mu}P^{\nu}) \right]\mathbf{\Pi}^{2},
\end{eqnarray}
\begin{eqnarray}
\parbox{12mm}{\includegraphics[scale=1.0]{fig7.eps}}\bigg|_{m^{2}=0} = \frac{4(P^2 + K_{\mu\nu}P^{\mu}P^{\nu})g^{3}}{3(4\pi)^{6}\epsilon^{2}} \left[1 + \frac{1}{2}\epsilon - 3\epsilon\, J_{3}(P^{2} + K_{\mu\nu}P^{\mu}P^{\nu})\right]\mathbf{\Pi}^{3},
\end{eqnarray}
\begin{eqnarray}
\parbox{10mm}{\includegraphics[scale=1.0]{fig26.eps}} \quad = -\frac{3(P^{2} + K_{\mu\nu}P^{\mu}P^{\nu})g^{3}}{2(4\pi)^{6}\epsilon^{2}}\left[1 + \frac{1}{4}\epsilon - 2\epsilon\, J_{3}(P^{2} + K_{\mu\nu}P^{\mu}P^{\nu})\right]\mathbf{\Pi}^{3},
\end{eqnarray}
\begin{eqnarray}
\parbox{10mm}{\includegraphics[scale=1.0]{fig10.eps}} =  \frac{2\mu^{\epsilon}g^{2}}{(4\pi)\epsilon} \left[1 - \frac{1}{2}\epsilon - \frac{1}{2}\epsilon J(P^{2} + K_{\mu\nu}P^{\mu}P^{\nu}) \right]\mathbf{\Pi},
\end{eqnarray}
\begin{eqnarray}
\parbox{16mm}{\includegraphics[scale=1.0]{fig11.eps}} = -\frac{4\mu^{\epsilon}g^{3}}{(4\pi)^{4}\epsilon^{2}} \left[1 - \epsilon - \epsilon J(P^{2} + K_{\mu\nu}P^{\mu}P^{\nu}) \right]\mathbf{\Pi}^{2},\quad\quad
\end{eqnarray}
\begin{eqnarray}
\parbox{12mm}{\includegraphics[scale=0.8]{fig21.eps}} = -\frac{2\mu^{\epsilon}g^{3}}{(4\pi)^{4}\epsilon^{2}} \left[1 - \frac{1}{2}\epsilon - \epsilon J(P^{2} + K_{\mu\nu}P^{\mu}P^{\nu}) \right]\mathbf{\Pi}^2, \quad\quad
\end{eqnarray}
\begin{eqnarray}
\parbox{12mm}{\includegraphics[scale=1.0]{fig13.eps}} =   \frac{2\mu^{\epsilon}g^{3}}{(4\pi)^{4}\epsilon^{2}} J_{4}(P^{2} + K_{\mu\nu}P^{\mu}P^{\nu})\mathbf{\Pi}^{2},
\end{eqnarray}
\begin{eqnarray}
\parbox{10mm}{\includegraphics[scale=1.0]{fig25.eps}} = \frac{6\mu^{\epsilon}g^{3}}{(4\pi)^{4}\epsilon^{2}} \left[1 - \frac{1}{2}\epsilon - \frac{1}{2}\epsilon J(P^{2} + K_{\mu\nu}P^{\mu}P^{\nu}) \right]\mathbf{\Pi}^{2},\quad\quad
\end{eqnarray}
\begin{eqnarray}
\parbox{12mm}{\includegraphics[scale=1.0]{fig24.eps}} =  -\frac{2\mu^{\epsilon}g^{3}}{(4\pi)^{4}\epsilon^{2}} J_{4}(P^{2} + K_{\mu\nu}P^{\mu}P^{\nu})\,\mathbf{\Pi}^{2},
\end{eqnarray}
\begin{eqnarray}
\parbox{12mm}{\includegraphics[scale=1.0]{fig1.eps}} =
\frac{2m^{2}g}{(4\pi)^{2}\epsilon}\left[ 1 - \frac{1}{2}\epsilon\ln\left(\frac{m^{2}}{4\pi\mu^{2}}\right)\right]\mathbf{\Pi},
\end{eqnarray}
\begin{eqnarray}
\parbox{10mm}{\includegraphics[scale=1.0]{fig2.eps}} = - \frac{4m^{2}g^{2}}{(4\pi)^{4}\epsilon^{2}}\left[ 1 - \frac{1}{2}\epsilon - \epsilon\ln\left(\frac{m^{2}}{4\pi\mu^{2}}\right)\right]\,\mathbf{\Pi}^{2},
\end{eqnarray}
\begin{eqnarray}
\parbox{12mm}{\includegraphics[scale=1.0]{fig22.eps}} =  \frac{2m^{2}g^{2}}{(4\pi)^{4}\epsilon^{2}}\left[ 1 - \frac{1}{2}\epsilon - \frac{1}{2} \epsilon\ln\left(\frac{m^{2}}{4\pi\mu^{2}}\right)\right]\mathbf{\Pi}^{2},
\end{eqnarray}
\begin{eqnarray}
\parbox{12mm}{\includegraphics[scale=1.0]{fig23.eps}} =  \frac{6m^{2}g^{2}}{(4\pi)^{4}\epsilon^{2}}\left[ 1 - \frac{1}{2} \epsilon\ln\left(\frac{m^{2}}{4\pi\mu^{2}}\right)\right]\mathbf{\Pi}^{2},
\end{eqnarray}
\begin{eqnarray}
\left(\parbox{12mm}{\includegraphics[scale=1.0]{fig6.eps}}\right)\Biggr|_{P^{2} + K_{\mu\nu}P^{\mu}P^{\nu}=0} =   -\frac{6m^{2}g^{2}}{(4\pi)^{2}\epsilon}\left[ 1 + \frac{1}{2}\epsilon -\epsilon\ln\left(\frac{m^{2}}{4\pi\mu^{2}}\right)\right]\mathbf{\Pi}^{2},
\end{eqnarray}
where
\begin{eqnarray}\label{uhduhufgjg}
J(P^{2} + K_{\mu\nu}P^{\mu}P^{\nu}) = \int_{0}^{1}d x \ln \left[\frac{x(1-x)(P^{2}+ K_{\mu\nu}P^{\mu}P^{\nu}) + m^{2}}{4\pi\mu^{2}}\right],
\end{eqnarray}
\begin{eqnarray}\label{uhduhufgjgdhg}
&& J_{3}(P^{2} + K_{\mu\nu}P^{\mu}P^{\nu}) = \nonumber \\ && \int_{0}^{1}\int_{0}^{1}d x \,d y\,(1-y) \ln \Biggl\{\frac{y(1-y)(P^{2}+ K_{\mu\nu}P^{\mu}P^{\nu})}{4\pi\mu^{2}} +  \left[1-y + \frac{y}{x(1-x)}  \right]\frac{m^{2}}{4\pi\mu^{2}}\Biggr\},
\end{eqnarray}
\begin{eqnarray}\label{ugujjgdhg}
J_{4}(P^{2} + K_{\mu\nu}P^{\mu}P^{\nu}) =  \frac{m^{2}}{\mu^{2}}\int_{0}^{1}d x\frac{(1 - x)}{\frac{x(1 - x)(P^{2} + K_{\mu\nu}P^{\mu}P^{\nu})}{\mu^{2}} + \frac{m^{2}}{\mu^{2}}}.
\end{eqnarray}
The results for the diagrams above are in agreement with their counterparts in the limit of slight LV mechanism, \emph{i. e.}, the limit $| K_{\mu\nu}|\ll 1$ \citep{PhysRevD.84.065030,Carvalho2013850,Carvalho2014320}. We are now in a position to compute the $\beta$-function and anomalous dimensions. Then, we obtain
\begin{eqnarray}
\beta(g) = \frac{N + 8}{3(4\pi)^{2}}\mathbf{\Pi}g^{2} - \frac{3N + 14}{3(4\pi)^{4}}\mathbf{\Pi}^{2}g^{3},
\end{eqnarray}
\begin{eqnarray}
\gamma(g) = \frac{N + 2}{36(4\pi)^{4}}\mathbf{\Pi}^{2}g^{2} - \frac{(N + 2)(N + 8)}{432(4\pi)^{6}}\mathbf{\Pi}^{3}g^{3},
\end{eqnarray}
\begin{eqnarray}
\gamma_{m}(g) = \frac{N + 2}{6(4\pi)^{2}}\mathbf{\Pi}g - \frac{5(N + 2)}{36(4\pi)^{4}}\mathbf{\Pi}^{2}g^{2}.
\end{eqnarray}
We observe that from the two LV contributions of the theory, one of them has not survived in the final theory. It is that associated to external momenta. It has been canceled out in the renormalization process, as the BPHZ method demands. In fact, one of the main features of this renormalization scheme is its elegance and generality, where the momentum-dependent integrals cancel out order by order in perturbation theory in the renormalization program. Thus, we do not need to be evaluate these momentum-dependent integrals \cite{BogoliubovParasyuk,Hepp,Zimmermann}. In the present case, for the finite loop order inspected here, they are the ones in Eqs. (\ref{uhduhufgjg})-(\ref{ugujjgdhg}). The remaining LV contribution, that coming from the volume elements of Feynman diagrams, is the only responsible for LV modifications of the final theory. Ic occurs in the form of powers of the full LV $\mathbf{\Pi}$ factor. It is clear that the number of powers of $\mathbf{\Pi}$ in a given term of the $\beta$-function and anomalous dimensions is equal to just a single number, the number of loops of the corresponding term and not only to the number of powers of the dimensionless renormalized coupling constant through its rescaling, specially in the case of the $\beta$-function. On the other hand, the earlier non-exact approach \cite{PhysRevD.84.065030,Carvalho2014320,Carvalho2013850} is based on a couple of concepts, jointly the field and dimensionless renormalized coupling constant ones, where they are scaled by the few finite terms of $\mathbf{\Pi}$, namely $\Pi \simeq \Pi^{(0)} + \Pi^{(1)} + \Pi^{(2)}$. In this case, it is not so clear and direct, as in the exact approach, that in a given term of the $\beta$-function, the powers of $\Pi$ and dimensionless renormalized coupling constant are not the same. Thus, besides exact, our approach is simpler than the non-exact one and is based on a single concept, the loop number concept, and not on a couple of them, gives immediately and in a straightforward way, the final form of the $\beta$-function and anomalous dimensions expressions, as claimed in Section \ref{Introduction}. In next Section, we will use this fact, valid for next-to-leading loop level, for its generalization for any loop order.    

\section{Exact Lorentz-violating all-loop order quantum corrections}

\par We generalize the results of last Section to any loop level. Before that, we state a theorem needed in the generalization process. The present theorem is applicable only to the theory considered in this Letter because the referred Lorentz violation mechanism is manifest through the bilinear combination $P^{\mu}P^{\nu}$. This combination is distinct from that for a higher-derivative one $\frac{1}{2\Lambda_{L}^{2n-2}}(P^{2})^{n}$ \cite{PhysRevD.76.125011} for example. 

\begin{theorem} 
Consider a given Feynman diagram in momentum space of any loop order in a theory represented by the Lagrangian density of Eq. (\ref{bare Lagrangian density}). Its evaluated expression in dimensional regularization in $d = 4 - \epsilon$ can be written as a general functional $\mathbf{\Pi}^{L}\mathcal{F}(g,P^{2} + K_{\mu\nu}P^{\mu}P^{\nu},\epsilon,\mu,m)$ if its Lorentz-invariant (LI) counterpart is given by $\mathcal{F}(g,P^{2},\epsilon,\mu,m)$, where $L$ is the number of loops of the corresponding diagram.
\end{theorem}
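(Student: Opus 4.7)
My plan is to generalize the single-integral change of variables already used in Section~II so that it acts simultaneously on every loop of an arbitrary Feynman graph. A generic $L$-loop, $n$-point amplitude generated from the Lagrangian in Eq.~(\ref{bare Lagrangian density}) has the form
\[
\mathcal{I}(P_1,\dots,P_n) = V \int \prod_{\ell=1}^{L} \frac{d^{d}q_\ell}{(2\pi)^{d}} \prod_{a} \frac{1}{k_a^{T}(\mathbb{I}+\mathbb{K})k_a + m^{2}},
\]
where $V$ collects the vertex couplings and symmetry factors (which carry no Lorentz structure in $\phi^{4}$ theory) and momentum conservation forces each internal momentum to be a linear combination $k_a = \sum_\ell \alpha_{a\ell}\,q_\ell + \sum_j \beta_{aj}\,P_j$. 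The idea is to perform the substitution $q_\ell = A^{-1} q'_\ell$ for every $\ell$ simultaneously, with $A \equiv \sqrt{\mathbb{I}+\mathbb{K}}$ (a symmetric, momentum-independent matrix), and to track how the integrand transforms.

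The first ingredient is immediate: the Jacobian of each loop integration contributes $\det A^{-1} = \mathbf{\Pi}$, so the $L$ independent loop integrations together produce the overall factor $\mathbf{\Pi}^{L}$. The second, and really the only nontrivial ingredient, is to check that every propagator becomes Lorentz-invariant in the primed variables. Writing $k_a = A^{-1} Q_a + R_a$ with $Q_a = \sum_\ell \alpha_{a\ell}\,q'_\ell$ and $R_a = \sum_j \beta_{aj}\,P_j$, and using the identities $A^{-T}(\mathbb{I}+\mathbb{K})A^{-1} = \mathbb{I}$ and $A^{T}A = \mathbb{I}+\mathbb{K}$, I expect the algebra to collapse to
\[
k_a^{T}(\mathbb{I}+\mathbb{K})k_a = Q_a^{T}Q_a + 2(AR_a)^{T}Q_a + (AR_a)^{T}(AR_a) = \bigl(Q_a + AR_a\bigr)^{2},
\]
so each propagator reduces to the ordinary Euclidean form $1/[(Q_a + AR_a)^{2} + m^{2}]$, with external momenta shifted uniformly as $P_j \mapsto AP_j$.

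With both ingredients in place the primed integral is, line by line, the defining integral for the Lorentz-invariant counterpart of the same diagram, evaluated at external momenta $AP_j$. By hypothesis that counterpart is $\mathcal{F}(g,P^{2},\epsilon,\mu,m)$, and since $(AP)^{T}(AP) = P^{T}(\mathbb{I}+\mathbb{K})P = P^{2} + K_{\mu\nu}P^{\mu}P^{\nu}$, the shift produces the claimed identity $\mathcal{I}(P) = \mathbf{\Pi}^{L}\,\mathcal{F}(g,P^{2}+K_{\mu\nu}P^{\mu}P^{\nu},\epsilon,\mu,m)$. For diagrams whose LI counterpart depends on several independent invariants $P_i\cdot P_j$, the same substitution sends each invariant to $P_i\cdot P_j + K_{\mu\nu}P_i^{\mu}P_j^{\nu}$, so the statement extends naturally; here the hypothesis that $\mathcal{F}$ depends on $P^{2}$ alone makes the result particularly clean.

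The hard part, as I see it, is not the algebra but the justification of the two technical assumptions underlying the substitution. First, I must argue that the linear change of variables commutes with dimensional regularization; this is standard because $A$ is momentum-independent and its Jacobian $\mathbf{\Pi}$ does not depend on $d$, so it passes unscathed through the analytic continuation. Second, the matrix square root $A = \sqrt{\mathbb{I}+\mathbb{K}}$ must be well defined and symmetric, which requires $\mathbb{I}+\mathbb{K}$ to be positive definite. This is exactly the physical regime within which the free propagator is well posed and which is tacitly assumed throughout Section~II, so no new hypothesis is needed beyond the bookkeeping outlined above.
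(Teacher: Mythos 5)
Your proposal is correct and follows essentially the same route as the paper's own proof: the simultaneous substitution $q'_\ell=\sqrt{\mathbb{I}+\mathbb{K}}\,q_\ell$ on all $L$ loop momenta, the Jacobian $\mathbf{\Pi}$ per loop giving the overall $\mathbf{\Pi}^{L}$, and the induced replacement $P^{2}\to P^{2}+K_{\mu\nu}P^{\mu}P^{\nu}$ in the external-momentum dependence. The only difference is one of rigor: you make explicit the propagator bookkeeping $k_a^{T}(\mathbb{I}+\mathbb{K})k_a=(Q_a+AR_a)^{2}$ and the positivity/regularization caveats that the paper leaves implicit, which strengthens rather than changes the argument.
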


\begin{proof} 
A general Feynman diagram of loop level $L$ is a multidimensional integral in $L$ distinct and independent momentum integration variables $q_{1}$, $q_{2}$,...,$q_{L}$, each one with volume element given by $d^{d}q_{i}$ ($i = 1, 2,...,L$). As showed in last Section, the substitution $q^{\prime} = \sqrt{\mathbb{I} + \mathbb{K}}\hspace{1mm}q$ transforms each volume element as $d^{d}q^{\prime} = \sqrt{\det(\mathbb{I} + \mathbb{K})}\,d^{d}q$. Thus $d^{d}q = d^{d}q^{\prime}/\sqrt{\det(\mathbb{I} + \mathbb{K})} \equiv \mathbf{\Pi}\,d^{d}q^{\prime}$, $\mathbf{\Pi} = 1/\sqrt{\det(\mathbb{I} + \mathbb{K})}$. Then, the integration in $L$ variables results in a LV overall factor of $\mathbf{\Pi}^{L}$. Now making $q^{\prime} \rightarrow P^{\prime}$ in the substitution above, where $P^{\prime}$ is the transformed external momentum, then $P^{\prime 2} = P^{2} + K_{\mu\nu}P^{\mu}P^{\nu}$. So a given Feynman diagram, evaluated in dimensional regularization in $d = 4 - \epsilon$, assumes the expression $\mathbf{\Pi}^{L}\mathcal{F}(g,P^{2} + K_{\mu\nu}P^{\mu}P^{\nu},\epsilon,\mu,m)$, where $\mathcal{F}$ is associated to the corresponding diagram if the LI Feynman diagram counterpart evaluation results in $\mathcal{F}(g,P^{2},\epsilon,\mu,m)$. 
\end{proof} 

\par Now by applying the BPHZ method, for all loop orders, in which all momentum-dependent integrals are eliminated in the renormalization process, order by order in perturbation theory \cite{BogoliubovParasyuk,Hepp,Zimmermann} and the theorem above, a possible LV dependence on $K_{\mu\nu}$ of $\beta$-function and anomalous dimensions coming from the LV momentum-dependent integrals disappears. So, according to the theorem aforementioned, the only LV dependence of these functions on the LV parameter $K_{\mu\nu}$ is the remaining one, that coming from the volume elements of the Feynman integrals, \textit{i. e.}, through the LV full $\mathbf{\Pi}^{L}$ factor, where $L$ is the number of loops of the referred term for these functions. Thus, according to the argument just exposed, we can write the exact LV radiative quantum corrections to the $\beta$-function and anomalous dimensions for all loop orders as
\begin{eqnarray}\label{uhgufhduhufdhu}
\beta(g) = \sum_{L=2}^{\infty}\beta_{L}^{(0)}\mathbf{\Pi}^{L-1}g^{L}, 
\end{eqnarray}
\begin{eqnarray}
\gamma(g) = \sum_{L=2}^{\infty}\gamma_{L}^{(0)}\mathbf{\Pi}^{L}g^{L},
\end{eqnarray}
\begin{eqnarray}
\gamma_{m}(g) = \sum_{L=1}^{\infty}\gamma_{m,L}^{(0)}\mathbf{\Pi}^{L}g^{L},
\end{eqnarray}
where $\beta_{L}^{(0)}$, $\gamma_{L}^{(0)}$ and $\gamma_{m,L}^{(0)}$ are the respective LI $L$th-loop radiative quantum corrections to the corresponding functions. When we are faced with the task of computing radiative quantum corrections in LV theories, the question of possible fine-tuning emerges. This question is avoided when we perform the renormalization program \cite{REYES2015190}, since this program fixes the physical quantities of a field theory through the renormalization group flow \cite{Maggiore:845116}. According to what was presented in Ref. \cite{Kleinert} for a $\lambda\phi^{4}$ theory which is corresponding to the studied here, the radiative corrections remain small, at least at five-loop order. We can see, one more time, even extending our considerations for all-loop level, that the exact approach is capable of furnishing, in a simple and straightforward way, the all-loop LV radiative quantum corrections to the $\beta$-function and anomalous dimensions in terms of a single concept, that of the loop number of the corresponding loop term. These results for the all-loop LV radiative quantum corrections would must be obtained from their non-perturbative counterparts in the appropriate limit, \emph{i.e.} in the weak coupling regime. As a concrete example, the known higher-loop analytically computed LV result is the LV three-loop radiative quantum correction to field anomalous dimension \cite{Carvalho2014320}. Following the prescription that each loop term of $\gamma(g)$ acquires a full LV $\mathbf{\Pi}$ factor, its three-loop result can be improved to its four-loop one \cite{PhysRevD.96.036016}, namely $\gamma(g) = \frac{(N+2)\mathbf{\Pi}^{2}g^{2}}{36(4\pi)^{4}} - \frac{(N+2)(N+8)\mathbf{\Pi}^{3}g^{3}}{432(4\pi)^{6}} - \frac{5(N^{2}-18N-100)(N+2)\mathbf{\Pi}^{4}g^{4}}{5184(4\pi)^{8}}$. The same prescription can be applied up to the six-loop \cite{PhysRevD.96.036016} radiative quantum corrections to field anomalous dimension as well as to $\beta$-function and mass anomalous dimension. In practice, the full LV $\mathbf{\Pi}$ factor can be computed by using its components whose bounds have been provided in Ref. \cite{PhysRevD.83.11}.    

\section{Asymptotic single-particle states and the LSZ formula}

\par We now investigate how the LV mechanism modifies the structure of the asymptotic single-particle states \cite{PhysRevD.90.065003}. For that, we can define ``in" and ``out" states represented by the fields $\phi_{in}$ and $\phi_{out}$, respectively. The fields satisfy to the LV Klein-Gordon equation
\begin{eqnarray}
(\partial_{\mu}\partial^{\mu} + K_{\mu\nu}\partial^{\mu}\partial^{\nu} - m^{2})\phi_{in, out} = 0.
\end{eqnarray}
The fields can be written in terms of plane-wave basis $u_{\textbf{p}}(\textbf{x}, t)$, for example for $\phi_{\textbf{p},in}$, as
\begin{eqnarray}
\phi_{\textbf{p},in}(\textbf{x}, t) = \int d^{3}p [a_{\textbf{p},in}u_{\textbf{p}}(\textbf{x}, t) + a_{\textbf{p},in}^{\dagger}u_{\textbf{p}}^{\star}(\textbf{x}, t)],
\end{eqnarray}
where $a_{\textbf{p},in}$ and $a_{\textbf{p},in}^{\dagger}$ are the destruction and creation operators of the field $\phi_{\textbf{p},in}$ which satisfy the usual commutation relation for boson fields. Thus we can define the $n$-particles Fock-space momentum states 
\begin{eqnarray}
\ket{\textbf{p}_{1},...,\textbf{p}_{n};in} = a_{\textbf{p}_{1},in}^{\dagger}\cdots a_{\textbf{p}_{n},in}^{\dagger}\ket{0},
\end{eqnarray}
where $\ket{0}$ is the vacuum of the theory. Defining analog ``out" states
\begin{eqnarray}
\ket{\textbf{q}_{1},...,\textbf{q}_{n};out} = a_{\textbf{q}_{1},out}^{\dagger}\cdots a_{\textbf{q}_{n},out}^{\dagger}\ket{0},
\end{eqnarray}
we can define the amplitudes for a transition from the $n$-particle ``in" state to the $m$-particle ``out" one by
\begin{eqnarray}
S_{fi} = \bra{\textbf{q}_{1},...,\textbf{q}_{m};out}\ket{\textbf{p}_{1},...,\textbf{p}_{n};in},
\end{eqnarray}
where $S_{fi}$ are the matrix elements of the $S$ matrix. Now we can express the $S_{fi}$ elements in terms of the $n$-particle Green's function $G^{(n)}(\textbf{p}_{1},...,\textbf{p}_{n})$. This is possible due to the LSZ reduction formula
\begin{eqnarray}
&& S_{fi} = \left(\frac{-i}{\sqrt{Z}}\right)^{n + m}N_{\textbf{q}_{1}}\cdots N_{\textbf{q}_{m}}N_{\textbf{p}_{1}}\cdots N_{\textbf{p}_{m}}(q_{1}^{2} + K_{\mu\nu}q_{1}^{\mu}q_{1}^{\nu} - m^{2})\cdots (q_{m}^{2} + K_{\mu\nu}q_{m}^{\mu}q_{m}^{\nu} - m^{2})\times \nonumber \\ && (p_{1}^{2} + K_{\mu\nu}p_{1}^{\mu}p_{1}^{\nu} - m^{2})\cdots (p_{n}^{2} + K_{\mu\nu}p_{n}^{\mu}p_{n}^{\nu} - m^{2})G^{(n + m)}(\textbf{q}_{1},...,\textbf{q}_{m},-{p}_{1},...,-\textbf{p}_{n}),
\end{eqnarray}
where $Z$ is a renormalization constant which relates the interacting field $\phi$ to the ``in" and ``out" ones through
\begin{eqnarray}
\lim_{x_{0}\rightarrow  -\infty} \phi(x) = \sqrt{Z}\phi(x)_{in},
\end{eqnarray}
\begin{eqnarray}
\lim_{x_{0}\rightarrow  \infty} \phi(x) = \sqrt{Z}\phi(x)_{out},
\end{eqnarray}
$N_{\textbf{p}} = ((2\pi)^{3}2\omega_{\textbf{p}})^{-1/2}$.

\section{Conclusions}

\par In this Letter, we presented the all-loop renormalization of the LV O($N$) $\lambda\phi^{4}$ scalar field theory, where the Lorentz violation mechanism is treated exactly by keeping the LV $K_{\mu\nu}$ coefficients in their nonperturbative form. This was achieved firstly through an analytic and explicit computation, at next-to-leading order, of the exact LV radiative quantum corrections to the $\beta$-function and anomalous dimensions. The full task was completed after we extended the formed employment, valid for all loop levels, through an induction process based on a general theorem emerging from the exact approach. The exact approach can, in a simple and straightforward way, furnish expressions for the exact all-loop LV radiative quantum corrections to the $\beta$-function and anomalous dimensions considering just the single concept of loop number, the loop number of the corresponding term of these functions, thus being, besides exact, simpler than the earlier non-exact approach. We showed that the exact results reduced to its non-exact, earlier obtained, counterparts in the appropriated limit. The exact calculation presented in this Letter, involving such a symmetry breaking mechanism in the referred theory, is the first one in literature for our knowledge. The LV symmetry breaking mechanism approached here is a naive one. Our approach contributes for the implementation of a renormalization procedure in theories with others Lorentz symmetry breaking mechanisms where the LV factor can be momentum-dependent \cite{S0217751X17500841}, which would avoid fine-tuning problems in these theories. In fact, applications of its possible effects on phase transitions and critical phenomena at $d<4$ were investigated \cite{EurophysLett10821001,doi:10.1142/S0217979215502598,doi:10.1142/S0219887816500493} through the computation of the radiative quantum corrections to critical exponents. As the critical exponents are universal quantities, they depend on the properties of the field such as its symmetry. As the Lorentz violation symmetry breaking mechanism occurs in the space-time where the field is defined and not in its internal one, the field symmetry is not changed and the critical exponents are the same as that obtained in the Lorentz-invariant theory, although the non-universal quantities as the renormalization constants, $\beta$-functions, anomalous dimensions, fixed point etc depend on $K_{\mu\nu}$ both explicitly through momentum-dependent parametric integrals which cancel out in the renormalization program and implicitly through the full LV $\mathbf{\Pi}$ factor. The computation of the final results for the critical exponents has shown the canceling of the $\Pi$ factor (the limit of the full LV $\mathbf{\Pi}$ factor for small $K_{\mu\nu}$ values) and the critical exponents do not depend on $K_{\mu\nu}$. This shows that the referred symmetry braking mechanism is naive. Thus, when dealing with universal quantities as critical exponents, we have to verify through the entire renormalization program that the effect of $K_{\mu\nu}$ is null. That was what was confirmed in the works \cite{EurophysLett10821001,doi:10.1142/S0217979215502598,doi:10.1142/S0219887816500493}. This occurs because the LV theory is related to its Lorentz-invariant counterpart through a redefinition of the metric. Furthermore, the exact approach applied here can inspire the task of considering the exact effect of LV mechanisms in distinct theories in high energy physics field theories of standard model extension for example as well as in low energy physics for condensed matter field theories in considering the problems of corrections to scaling, finite-size scaling, amplitude ratios as well as critical exponents in geometries subjected to different boundary conditions for systems belonging to the O($N$) and Lifshitz \cite{PhysRevB.67.104415,PhysRevB.72.224432,Carvalho2009178,Carvalho2010151} universality classes etc. 

\section*{Acknowledgements}

\par With great pleasure the authors thank the kind referee for helpful comments. PRSC and MISJ would like to thank Federal University of Piau\'{i} and FAPEAL (Alagoas State Research Foundation), CNPq (Brazilian Funding Agency) for financial support, respectively.

\bibliography{apstemplate}

\end{document}